\let\llncssubparagraph\subparagraph
\let\subparagraph\paragraph
\let\subparagraph\llncssubparagraph
\newlength{\gembrespace}
\newcommand{\compressenvironments}[1]{%
  \forcsvlist{\compressenvironment}{#1}%
}
\newcommand{\compressenvironment}[1]{%
  \BeforeBeginEnvironment{#1}{\vspace{\gembrespace}}%
  \AfterEndEnvironment{#1}{\vspace{\gembrespace}}%
}
\newtheorem{obs}{Observation}
\title{$M$-Guarding Polygons In $K$-Visibility}
\author{
Yeganeh Bahoo\inst{1}
\and
Ahmad Kamaludeen\inst{1}
}
\institute{
Department of Computer Science, Toronto Metropolitan University, Toronto, Canada\\
\email{\{bahoo,ahmad.kamaludeen\}@torontomu.ca}
}
\begin{document}

\maketitle

\section{Introduction}\label{sec:intro}
The Art Gallery Problem (AGP) is a classic challenge of ensuring every point in a polygon $P$ is seen by at least one guard~\cite{KUMARGHOSH199175,urrutia2000art,orourke1987art}. 
We study a variation with two key complexities: \textit{$M$-guarding}, where each point must be seen by at least $M$ guards, and \textit{$k$-visibility}, where a line of sight can penetrate up to $k$ polygon edges. The concept of $k$-visibility originates from the ``Superman Problem''~\cite{supermanmouawad1994} and has since been adapted to AGP variants~\cite{ModemIO_Aichholzer2015}, with efficient algorithms developed for computing $k$-visibility regions~\cite{time-space_BAHOO2019,vis_region_Bahoo2020}. The notion of $M$-guarding has also been explored in $0$-visibility~\cite{Belleville1994KGuardingPO}. However, the combination of these two areas remains largely unexplored, with limited related work in super-guarding~\cite{super_dark_rays,Super_Stars}. We present a theorem establishing that any polygon with holes can be 2-guarded under $k$-visibility where $k \geq 2$, which expands existing results in 0-visibility. We provide an algorithm that $M$-guards a polygon using a convex decomposition of the polygon. We show that for any even $k \geq 2$ there exists a placement of guards such that every point in the polygon is visible to $k + 2$ guards. Motivated by modern applications of $k$-visibility in wireless communications and mapping~\cite{yang2025channel,kim2024structure}, this paper presents a formal algorithm for $M$-guarding polygons using edge-restricted guards.
\section{Definitions}\label{sec:defn}
This section introduces the key terms and concepts essential for understanding the problem of $k$-visibility guarding in polygons with holes. We use some basic terms that are well defined in the field of Computational Geometry, such as \textit{polygon},\textit{edges} \textit{reflex vertices}, \textit{simple polygon}, \textit{pocket} and \textit{holes}. See~\cite{orourke1987art,shermer1992recent} for formal definitions.

A \textit{diagonal} is a line segment between two vertices that lies entirely within $P$. For this paper, an \textit{obstacle} is a polygon edge. Two points are \textit{$k$-visible} if the line segment between them crosses at most $k$ obstacles. A \textit{guard} is a point within a polygon that has the ability to ``see'' or cover some points in the interior of the polygon. A \textit{$k$-visibility guard} can see through up to $k$ obstacles. All guards in this work are restricted to the interior side of polygon edges. 

\section{Results}\label{sec:results}
In this section, we state the contributions that this paper will provide. As a reminder in this work, our goal is to guarantee that each point in any polygon $P$ can be seen by at least $M$ guards when $k \geq 2$ and to establish a structured method for placing guards to ensure $M$-visibility. To achieve this, we present an algorithm that uses an \textit{optimal convex decomposition}, and then assigns guards to edges while keeping the required coverage. \textit{Optimal convex decomposition} refers to the division of a polygon $P$ into non-overlapping polygons such that each edge of the polygon $P$ is an edge of exactly one sub-polygon. In this paper, they are divided using diagonals, and all pieces are convex. A \textit{real edge} refers to an edge of the original polygon. Also, a \textit{dual graph} refers to a graph where each node represents a convex piece, and an edge exists between two nodes if their corresponding convex pieces share an edge. We begin by recognizing that for a given simple polygon $P$, an optimal convex decomposition can be constructed. \cite{greene1983decomposition}

\subsection{Preliminary Results}




\begin{obs}\label{lemma:2} 
A non-convex simple quadrilateral $P$ has at most one pocket.
\end{obs}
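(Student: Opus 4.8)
The plan is to characterize pockets through the convex hull and then to exploit the rigid angle budget of a quadrilateral. Recall that a pocket is the region trapped between the boundary of $P$ and an edge of the convex hull that is not itself an edge of $P$ (a \emph{lid}); the number of pockets therefore equals the number of such lid edges. So it suffices to show that the convex hull of a non-convex simple quadrilateral carries exactly one lid.

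First I would bound the number of reflex vertices. The interior angles of any simple quadrilateral sum to $360^\circ$. If $P$ had two reflex vertices, each with interior angle exceeding $180^\circ$, those two angles alone would already exceed $360^\circ$, which is impossible. Hence a non-convex simple quadrilateral has \emph{exactly} one reflex vertex; label the vertices $A,B,C,D$ in boundary order so that $D$ is the reflex one.

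Next I would identify the convex hull explicitly. Because $D$ is reflex in the simple quadrilateral $ABCD$, it lies inside the triangle $ABC$, so the convex hull of $\{A,B,C,D\}$ is exactly $\triangle ABC$. Its three hull edges are $AB$, $BC$, and $CA$. The first two coincide with the consecutive polygon edges $AB$ and $BC$ and so are not lids; the segment $CA$ is not an edge of $P$ (the boundary instead runs $C\to D\to A$), so it is the unique lid. Consequently $P$ has exactly one pocket, which in particular is at most one, as claimed.

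The step I expect to be the main obstacle is the geometric claim that the reflex vertex necessarily lies inside the triangle spanned by the other three vertices, equivalently that the hull of a non-convex simple quadrilateral is the triangle on its three convex vertices. I would justify this from simplicity together with the reflex condition at $D$: the incident edges $CD$ and $DA$ turn inward at $D$, so $D$ cannot be a vertex of the hull, which forces $A,B,C$ to be the hull vertices. Once this is secured, the lid count—and hence the pocket count—follows immediately.
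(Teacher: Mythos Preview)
Your argument is correct. The paper records this statement as an Observation and offers no proof at all, so there is nothing to compare against; your write-up would in fact supply the missing justification. The angle-sum bound forces a unique reflex vertex, and the standard fact that a reflex vertex of a simple polygon cannot be a convex-hull vertex (any supporting line at a hull vertex confines the interior angle to at most $180^\circ$) makes the hull the triangle $ABC$, leaving $CA$ as the sole lid. One small point worth making explicit for completeness: the three convex vertices $A,B,C$ cannot be collinear, since then the hull of $\{A,B,C,D\}$ would degenerate to a segment while a simple polygon has positive area; this secures that the hull really is a nondegenerate triangle and that $D$ lies strictly in its interior.
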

\vspace{-4mm}

\begin{theorem}\label{thm:holes}
Every polygon with holes can be 2-guarded under $k$-visibility for $k \geq 2$.
\end{theorem}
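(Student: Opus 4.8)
The plan is to work entirely inside an optimal convex decomposition of $P$ and to place every guard on the interior side of a real edge, as the paper's framework dictates. First I would compute such a decomposition of the polygon-with-holes $P$ into convex pieces joined along diagonals and form its dual graph. The point of convexity is that a guard placed anywhere on the boundary of a convex piece sees that entire piece under ordinary visibility: the segment from the guard to any point of the piece stays inside it and therefore crosses no real edge (recall that a diagonal is not an obstacle). Hence the whole problem reduces to certifying that every convex piece is assigned \emph{two} guards, each of which sees all of it within the $k$-edge budget.

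Next I would dispatch the easy pieces. Any piece that carries two or more real edges is 2-guarded outright by seating one guard on each; a piece with a single real edge receives one guard directly, and I would supply its second full-coverage guard from a neighbour across the shared diagonal. When the union of the piece and that neighbour is convex, the second guard still sees the target with $0$-visibility, since only the (free) diagonal is crossed. The residual cases are pieces carrying no real edge at all, and adjacencies whose union is non-convex because of a reflex vertex on the shared boundary.

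To handle a reflex union I would invoke Observation~\ref{lemma:2}: merging the two triangles across the diagonal produces a quadrilateral with at most one pocket, and a guard seated at the vertex opposite the reflex vertex sees the merged region in full, delivering the missing second guard. Propagating this assignment through the dual graph then doubly covers every piece. The role of the hypothesis $k\geq 2$ surfaces precisely for the pieces that can only be reached by a sightline forced to traverse a hole: entering and exiting the hole costs two real-edge crossings, so $k\geq 2$ is exactly what lets such a guard complete the coverage, extending the $0$-visibility notion of $M$-guarding~\cite{Belleville1994KGuardingPO} to the setting of holes. I expect the main obstacle to be the crossing-count bookkeeping --- proving, uniformly across all pieces, including those buried behind reflex chains or screened by a hole, that two full-coverage guards can always be chosen with each covering segment crossing at most $k$ real edges. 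Keeping every such count within the budget of two, by combining the one-pocket bound with a careful traversal of the dual graph, is the crux of the argument.
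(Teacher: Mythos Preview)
Your route diverges substantially from the paper's. The paper does not build a convex decomposition for Theorem~\ref{thm:holes} at all: it invokes the known $0$-visibility result of Belleville et al.~\cite{Belleville1994KGuardingPO} to 2-guard the hole-free version of $P$, then restores coverage after a hole is inserted by seating one extra guard on every edge of the hole. The hypothesis $k\geq 2$ is used so that each hole-edge guard can see across the hole (two real-edge crossings) and recover whatever region the hole now occludes. The convex-decomposition machinery you describe is the paper's tool for the later $(k{+}2)$-guarding theorem, not for this one.

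Your plan, by contrast, re-derives everything from a decomposition and leaves its central step open: you explicitly flag the crossing-count bookkeeping as ``the main obstacle'' and do not carry it out. Several sub-steps are also shaky. Observation~\ref{lemma:2} concerns simple \emph{quadrilaterals}, so it only applies when the two merged pieces are triangles; in an optimal convex decomposition the pieces need not be triangles, and the one-pocket bound does not directly control larger unions. The treatment of pieces with no real edge is a gesture rather than an argument. And for polygons with holes the dual graph is no longer a tree, so the ``careful traversal'' you rely on is not automatic. In short, the paper's argument is a short reduction to a cited theorem plus a patch on the hole boundary, whereas your proposal attempts a from-scratch construction whose hardest part is still to be done.
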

\vspace{-4mm}
\begin{proof}
Let $P$ be a polygon, and suppose we are using guards with $k$-visibility for some $k \geq 2$. It is known that any simple polygon (i.e., a polygon without holes) can be 2-guarded under 0-visibility \cite{Belleville1994KGuardingPO}. Since $k$-visibility generalizes 0-visibility, any 2-guarding solution for a simple polygon in 0-visibility also holds for $k \geq 2$. Consider modifying $P$ by introducing a hole, creating a new polygon $P'$. The original 2-guarding solution may no longer suffice under 0-visibility, as the hole may obstruct visibility. To compensate, we place guards on each edge of the hole. Let $H$ denote this set of guards, each of which has $k$-visibility with $k \geq 2$.

 Each guard in $H$ can see through up to $k \geq 2$ walls, allowing them to view visibility to regions that became hidden when the hole was introduced. For convex holes, the guards on the hole’s boundary can collectively see into the previously visible region multiple times, up to $|H|$ times. Non-convex holes introduce reflex vertices that can create pockets. However, by placing a guard on each edge of the hole, we ensure that every such reflex vertex is adjacent to at least one guard. Because each guard can see through at least two edges, the $k$-visibility condition allows these guards to recover visibility to regions that would otherwise remain hidden. Thus, the union of the original 2-guarding solution and the guards placed on the hole’s edges ensures that $P'$ remains 2-guarded under $k$-visibility. \qed
\end{proof}
\begin{lemma}\label{thm:blocking_lemma}
Given a monotone simple polygon $P$ and an optimal convex decomposition of it, connecting 2 adjacent convex pieces creates 1 obstacle composed of 2 edges for any single ray of vision, which would be entirely visible in 2-visibility.
\end{lemma}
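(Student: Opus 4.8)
The plan is to fix two convex pieces $A$ and $B$ of the decomposition that are adjacent in the dual graph, hence share a common diagonal $d=\overline{uv}$, and to track what a single straight ray of vision encounters as it passes from one piece into the other. The two facts I would build on are: (i) $d$ is a diagonal, not a real edge, so it is not an obstacle and crossing it is free in the $k$-visibility count; and (ii) each piece is convex, so any straight line meets the boundary of a single piece in at most two points, i.e. it crosses at most two of that piece's edges.

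First I would argue that a ray travelling from the interior of $A$ across $d$ into $B$ spends one of $A$'s two allowed boundary crossings on $d$ itself. Consequently, inside $A$ the ray can cross at most one \emph{real} edge of $A$, and symmetrically at most one real edge of $B$. Summing, the ray crosses at most two real edges in total while traversing the connected region $A\cup B$, and these are the only genuine obstacles the connection introduces for that ray.

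Next I would identify these (at most) two real edges with the single obstacle named in the statement. Gluing $A$ and $B$ along $d$ can only create reflex angles at the two endpoints $u,v$ of $d$, since every other vertex of $A\cup B$ inherits its convex interior angle from a single piece. Using the monotonicity of $P$ together with Observation~\ref{lemma:2}, I would show the union behaves like a single pocket: at most one pocket is formed, and its mouth is bounded by the two real edges counted above. Thus connecting two adjacent pieces presents one obstacle composed of two edges, rather than two independent obstacles along any single ray.

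Finally, since $k\ge 2$, a ray is permitted to penetrate up to two edges, which is exactly the cost of crossing this obstacle; hence every point reached through the connection is seen, and the connecting obstacle is entirely visible in $2$-visibility. The main obstacle in the argument is the middle step: ruling out the degenerate case in which both endpoints $u$ and $v$ become reflex and the union presents two separate pockets to some ray. I expect monotonicity to be the decisive ingredient there, forcing the two pieces to line up along the monotone direction so that no single ray is made to pierce more than two real edges, which is what keeps the crossing count within the $2$-visibility budget.
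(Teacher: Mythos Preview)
Your convexity count in the first paragraph already does the work: since $A$ and $B$ are convex and lie on opposite sides of the line through $d$, a segment from a point of $A$ to a point of $B$ meets $\partial A$ exactly once and $\partial B$ exactly once, so it crosses at most two real edges of $A\cup B$; hence $2$-visibility suffices. This is a different and more elementary route than the paper's. The paper argues structurally: a diagonal of an optimal convex decomposition resolves at most two reflex vertices, each reflex vertex gives rise to a pocket, and in the two-pocket case it invokes a cited result \cite{keil2000polygon} to conclude that the pockets do not overlap, so no single ray is blocked twice. Your line-crossing bound bypasses this pocket bookkeeping entirely, and in fact also explains the non-overlap (the segment crosses the line through $d$ only once, so it can pass on the $u$-side or the $v$-side, not both).

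Where your proposal goes astray is the second paragraph. Observation~\ref{lemma:2} concerns non-convex \emph{quadrilaterals}; the union $A\cup B$ is in general not a quadrilateral, so the observation does not apply. More importantly, your ``at most one pocket'' claim is false: both endpoints $u$ and $v$ of $d$ can become reflex in $A\cup B$, and the paper explicitly treats this two-pocket case rather than ruling it out. Monotonicity is not what rescues you here --- nothing needs rescuing, because your first-paragraph count already caps the real-edge crossings at two regardless of how many pockets appear. Drop the detour through Observation~\ref{lemma:2} and the speculative reliance on monotonicity, and your argument stands on the convexity count alone.
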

\vspace{-4mm}
\begin{proof}
Adding each diagonal during the decomposition can eliminate at most two reflex vertices, and reflex vertices always form a pocket \cite{ghosh1988recognizing}. Therefore we may eliminate one or two pockets. In the case where we have two pockets, they do not overlap, and there is no singular ray of vision that is blocked twice when convex pieces are merged \cite{keil2000polygon}. Therefore, by doing the opposite process of merging the polygon and removing diagonals that partition the polygon, at each step, we do not invalidate the 2-visibility guards by blocking the existing guards multiple times in one move, as long as there was no guard on the removed diagonal. \qed
\end{proof}
To explain the Lemma~\ref{thm:blocking_lemma} in simple terms, when we connect 2 pieces in the optimal convex decomposition, there should be a pocket separating them, thus forming a concave polygon. This concave polygon would be entirely visible to any $2$-visibility guard. This is the key to proving that merging pairs of convex polygon, which we will do later, maintains visibility enough that every point will be seen by 4 different guards. 
\\
Now, consider when we remove a diagonal that partitions the polygon, there may be a guard on that edge. We can move that guard to be in a different convex piece. We claim there exists a point another piece that can guard the entirety of the first convex piece.
\vspace{-2mm}
\begin{lemma} \label{thm:boundary_guarding}
Let $A$ be a convex polygon. If a point $g$ outside of $A$ sees the entire boundary $\partial A$, then $g$ sees every point in $A$.
\end{lemma}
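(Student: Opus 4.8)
The plan is to reduce the claim to a single, essentially one-dimensional monotonicity observation about how the obstacle count behaves along a ray emanating from $g$. Fix an arbitrary point $p \in A$. If $p \in \partial A$ we are done immediately by hypothesis, so assume $p$ is interior. First I would consider the ray from $g$ through $p$. Because $A$ is convex and $g \notin A$, this ray meets $\partial A$ in exactly two points, an entry point $a$ and an exit point $b$, occurring in the order $g, a, p, b$ along the ray; in particular $p$ lies on the segment $\overline{gb}$, with $b$ strictly beyond $p$.

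Next I would invoke the hypothesis that $g$ sees all of $\partial A$, which in the $k$-visibility model means that the segment from $g$ to any boundary point crosses at most $k$ obstacles; in particular $\overline{gb}$ crosses at most $k$ obstacles. The core step is then the monotonicity of the crossing count along nested collinear segments: since $\overline{gp} \subseteq \overline{gb}$ (as $p$ lies between $g$ and $b$), extending $\overline{gp}$ to $\overline{gb}$ only appends the subsegment $\overline{pb}$, which can add crossings but never remove them. Hence the number of obstacles met by $\overline{gp}$ is at most the number met by $\overline{gb}$, which is at most $k$. Therefore $g$ sees $p$, and since $p$ was arbitrary, $g$ sees every point of $A$.

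The main obstacle is making the geometric configuration airtight rather than the counting itself. I need convexity of $A$ together with $g \notin A$ to guarantee that the ray through any interior $p$ exits the polygon at a boundary point $b$ lying strictly beyond $p$, so that the nesting $\overline{gp} \subseteq \overline{gb}$ genuinely holds; degenerate directions, such as the ray grazing a vertex of $A$, I would dispatch either by a short limiting argument or by perturbing the direction of the ray. I would also state the monotonicity of obstacle crossings along nested collinear segments as an explicit auxiliary fact, since it is precisely the property that lets visibility of the farther boundary point $b$ transfer to the nearer interior point $p$, and it is the hinge on which the whole lemma turns.
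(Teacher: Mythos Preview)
Your argument is correct, and it takes a genuinely different route from the paper's. The paper encloses the interior point $p$ in a triangle $\triangle g q_1 q_2$ with $q_1,q_2\in\partial A$ and then argues that $\overline{gp}$ lies inside that triangle and hence ``within the union of $A$ and the guard's view''; this is essentially a 0-visibility containment argument. You instead shoot a single ray from $g$ through $p$, pick the \emph{far} boundary intersection $b$, and use the monotonicity of the obstacle count on the nested segments $\overline{gp}\subseteq\overline{gb}$.

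Your approach has two advantages. First, it is more elementary: it needs only one boundary point rather than two, and the inequality ``crossings of a subsegment $\le$ crossings of the full segment'' is immediate. Second, it transfers to $k$-visibility without modification, which is exactly how the lemma is invoked later (in the proof of Lemma~\ref{thm:pushing_lemma} under 2-visibility); the paper's triangle phrasing, with its talk of ``unobstructed'' sides, reads as a 0-visibility argument and would need your monotonicity observation anyway to be made rigorous for general $k$. The paper's version, on the other hand, is slightly more pictorial and sidesteps the need to name the degenerate grazing case you flag, though that case is harmless either way.
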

\vspace{-4mm}
\begin{proof} 
Omitted for space, see Appendix \ref{appendix:proofs}
\end{proof}
\vspace{-2mm}
\begin{algorithm}[h]
    \caption{Sweeping algorithm for critical vertices in a pocket}
    \label{alg:sweep}
    \begin{algorithmic}[1]
        \REQUIRE A reflex chain defined by vertices $v_1, v_2, \dots, v_n$ forming a pocket, and an edge $e$ on the boundary of region $B$ facing the pocket.
        \ENSURE A segment $S \subset e$ such that every point in $S$ sees the entire boundary of the pocket under $k$-visibility.
        \STATE Let $v_n$ be the vertex at the end of the reflex chain (where the sweep begins).
        \STATE Let $v_1$ be the vertex at the start of the reflex chain.
        \STATE Initialize an empty list $C$ to store the critical vertices encountered by the sweep.
        \STATE Initialize a ray $L$ originating at $v_n$, rotating counterclockwise along the boundary of $B$.
        \FOR{each reflex vertex $u_i$ in the chain from $v_n$ to $v_1$}
            \IF{the ray $L$ encounters $u_i$ before intersecting any other chain edge}
                \STATE Append $u_i$ to the list $C$.
            \ENDIF
        \ENDFOR
        \STATE Let $k$ be the number of critical vertices found in the pocket.
        \STATE Let $w$ be the $\lceil k/2 \rceil$-th vertex in $C$.
        \STATE Let $L_w$ be the ray from $v_n$ through $w$.
        \STATE Let $p$ be the intersection of $L_w$ and edge $e$.
        \STATE Let $x$ be the reflex vertex at the endpoint of $e$.
        \STATE Let $S$ be the subsegment of $e$ between $p$ and $x$.
        \STATE \textbf{return} $S$.
    \end{algorithmic}
\end{algorithm}
\begin{figure}
    \centering
    \includegraphics[height=4cm]{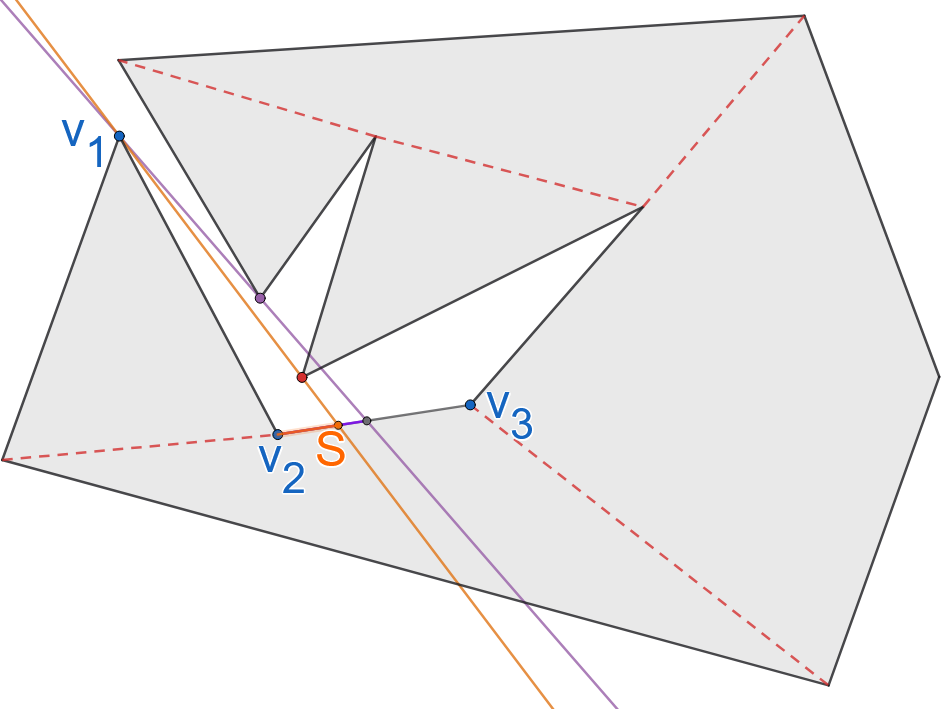}
    \caption{The ray is cast from $v_1$ along the reflex chain that forms the pocket. Once we find the ($k/2$)-th  critical reflex vertex (orange ray), the sweep ends}
    \label{fig:sweep}
\end{figure}

\begin{lemma} \label{thm:nonempty_S}
When the convex polygon $B$ in the decomposition is separated from another convex polygon $A$ by a diagonal $d$, there exists a non-empty set of points $S$ on the boundary $\partial P$ such that each point in $S$ sees the entire boundary of $A$ in 2-visibility.
\end{lemma}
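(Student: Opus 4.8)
The plan is to reduce the statement to a direct invocation of Algorithm~\ref{alg:sweep}, so that the real work becomes (i) checking that the pair $(A,B)$ actually meets the sweep's preconditions and (ii) showing that the returned segment is non-degenerate and lies on $\partial P$. First I would fix the geometry around $d$: since $A$ and $B$ are adjacent convex pieces, erasing $d$ merges them into a single, generally non-convex polygon whose only new reflex vertices occur at the endpoints of $d$. By Lemma~\ref{thm:blocking_lemma} this merge contributes an obstacle of at most two edges to any single ray of vision, and the part of $\partial A$ that is occluded from $B$ is exactly the reflex chain $v_1,\dots,v_n$ bounding the induced pocket. I would then take $e$ to be a \emph{real} edge of $B$ facing this pocket; such an edge exists when $B$ meets $\partial P$ on the pocket side, and ensuring this (or reducing to a boundary-adjacent piece) is the first thing I would pin down, since it is what forces the eventual output to satisfy $S\subset\partial P$.

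Next I would run Algorithm~\ref{alg:sweep} on this chain and edge $e$. The sweep from the apex $v_n$ records the critical vertices $C$, selects the median $w$, and returns $S=[p,x]$ with $p=L_w\cap e$ and $x$ the reflex endpoint of $e$. To prove $S$ non-empty I would first argue $C\neq\emptyset$: the pocket is non-degenerate because the merge genuinely introduces a reflex structure (here Observation~\ref{lemma:2} controls the simplest quadrilateral case). Hence $w$ and the ray $L_w$ are well defined, and since $w$ is a median critical vertex lying strictly interior to the chain, $L_w$ must meet $e$ at a point $p$ strictly between its endpoints; in particular $p\neq x$, so $S$ has positive length.

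Finally I would establish the coverage claim that turns $S$ into the set the lemma asserts, namely that every $s\in S$ sees all of $\partial A$ in $2$-visibility. For the unoccluded part of $\partial A$ this is immediate: $A$ is convex and $s$ looks through the diagonal $d$, which is not an obstacle, so those sightlines cross no real edges. For the occluded part I would use the median split: sightlines from $s$ to chain vertices on the $v_n$-side and on the $v_1$-side each traverse the pocket's two-edge obstacle at most once, so by Lemma~\ref{thm:blocking_lemma} no sightline crosses more than two edges. I expect this balancing at $w$ to be the main obstacle: one must verify that positioning $s$ on $[p,x]$ simultaneously bounds the crossing count to the ``deepest'' vertices on \emph{both} sides of the median by two, and that $p$ truly lands interior to $e$ rather than collapsing onto $x$ when the chain is short. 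Combining the two cases shows each $s\in S$ sees all of $\partial A$ in $2$-visibility; since $S\subset e\subset\partial P$ is non-empty, the lemma follows, and Lemma~\ref{thm:boundary_guarding} then promotes boundary coverage to coverage of all of $A$ wherever that is needed.
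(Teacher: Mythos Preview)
Your route is genuinely different from the paper's. The paper does not invoke Algorithm~\ref{alg:sweep} at all here; instead it fixes a point $b\in B$ at distance $\varepsilon$ from the diagonal $d$ and argues by contradiction that any edge $e$ obstructing $b$'s view of $\partial A$ would have to lie in $A$, in $B$, or in a third piece $C$, and each placement forces either a violation of convexity or an intersection with the interior of $A$. That argument is what certifies $S\neq\emptyset$; the sweep is deployed only later, in Theorem~\ref{thm:3}, as a \emph{computational} way to locate $S$ once its existence is already known. By contrast, you are trying to use the sweep as the \emph{proof} of existence, which reverses the logical dependency.

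This creates a real gap. You yourself flag the ``main obstacle'': you never establish that the median split at $w$ actually caps the crossing count at two toward every vertex of the chain, and nothing in the paper proves Algorithm~\ref{alg:sweep} correct independently of this lemma. Your appeal to Lemma~\ref{thm:blocking_lemma} does not close it either: that lemma is stated for a \emph{monotone} polygon and, more importantly, bounds obstacles arising from the $A\cup B$ merge only---it says nothing about edges of a third piece of the decomposition intruding between $s\in e$ and $\partial A$, which is precisely the case the paper's contradiction argument handles explicitly. Finally, you acknowledge but do not resolve the case where $B$ has no real edge on $\partial P$; the paper treats this separately by passing to a boundary-adjacent piece $C$ sharing a vertex of $d$ and rerunning the $\varepsilon$-argument there. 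Without independent correctness of the sweep, a treatment of third-piece obstructions, and the interior-$B$ case, your plan does not yet yield a proof.
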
\vspace{-3mm}

\begin{proof}[Proof by Contradiction]
Let $d$ denote the diagonal that separates $A$ and $B$. Both $A$ and $B$ are convex. In our first case, $B$ is not interior. We consider the strong $k$-visibility polygon of each edge of $A$. The intersection of these visibility regions and $B$, is a set $T$ of points in $B$ that, if it exists, sees the entire boundary $A$ \cite{Segmentvis_bahoo2023}. Since both $A$ and $B$ are convex and share a diagonal $d$, a point $\epsilon$ away from $d$ in $B$ can see all of $A$. We aim to show that this intersection is non-empty and forms our desired set $S$. Assume for contradiction that $S$ is empty. That is, for all points $b \in B$, the point $b$ cannot see the entire boundary of the convex polygon $A$. Now, consider a point $b \in B$ that is $\epsilon$ distance away from the diagonal $d$. Since $A$ is convex, any obstruction $e$ preventing $b$ from seeing an edge of $A$ must lie on the boundary of $A$, in $B$, or on a third polygonal region $C$:
\begin{enumerate}[nosep, leftmargin=*]
    \item If $e$ is part of $A$, then visibility is blocked by the boundary of $A$, which is acceptable since the point $b$ acts as a 2-visibility guard, and therefore can see through this edge.
    \item $e$ lies in $B$: If $e$ is the edge that point $b$ lies on, it can see through this point by 2-visibility. If this is some other edge, then $B$ is no longer convex, and that is a contradiction.
    \item $e$ lies in a third polygonal region $C$: In this case, $e$ must lie between $b \in B$ and some portion of $\partial A$. But since $b$ is arbitrarily close to $d$, for $e$ to block the view, it must intersect the visibility cone from $b$ to $\partial A$ near $d$. However, due to convexity of both $A$ and $B$, and the fact that $b$ is infinitesimally close to $d$, such an edge $e$ must either intersect the interior of $A$, which contradicts the assumption that regions are convex and form a simple polygon, since the boundaries are intersecting. (Fig.~\ref{fig:hole_non_example}) 
\end{enumerate}
If $e$ lies outside of both $A$ and $B$, then $e$ cannot block any visibility from $b \in B$ to $A$ without intersecting $A$, again contradicting the assumption that $A$ is a valid convex polygon. Hence, the only possible obstructions to $b$ seeing all of $A$ are edges on $\partial A$. But since $b$ is arbitrarily close to $d$, and $A$ is convex, there exists a view corridor from $b$ to any edge of $A$. Therefore, our assumption must be false, and the set $S$ is non-empty. Next, our second case where we consider that $B$ can be an interior polygon. If $B$ is an interior polygon, then instead of $B$, we consider the convex piece $C$ that has a real edge and shares a vertex $v$ with $d$. All intervening convex pieces between $C$ and $A$ will be separated by diagonals originating from $v$. By the same argument above, we can see that there will be no edge $e$ blocking the visibility of the guards in $C$ from $A$ (other than the boundaries of $C$ and $A$ themselves), and so, the set $S$ is still non-empty as expected and our proof is complete. \qed
\end{proof}
Lemma \ref{thm:nonempty_S} implies that we have all the points between $\epsilon$ and the vertex to place the guard, and this region will be determined by any reflex vertices that are within the pocket.  (Fig.~\ref{fig:set_lemma})

\begin{figure}[h]
    \centering
    \includegraphics[height=3.5cm]{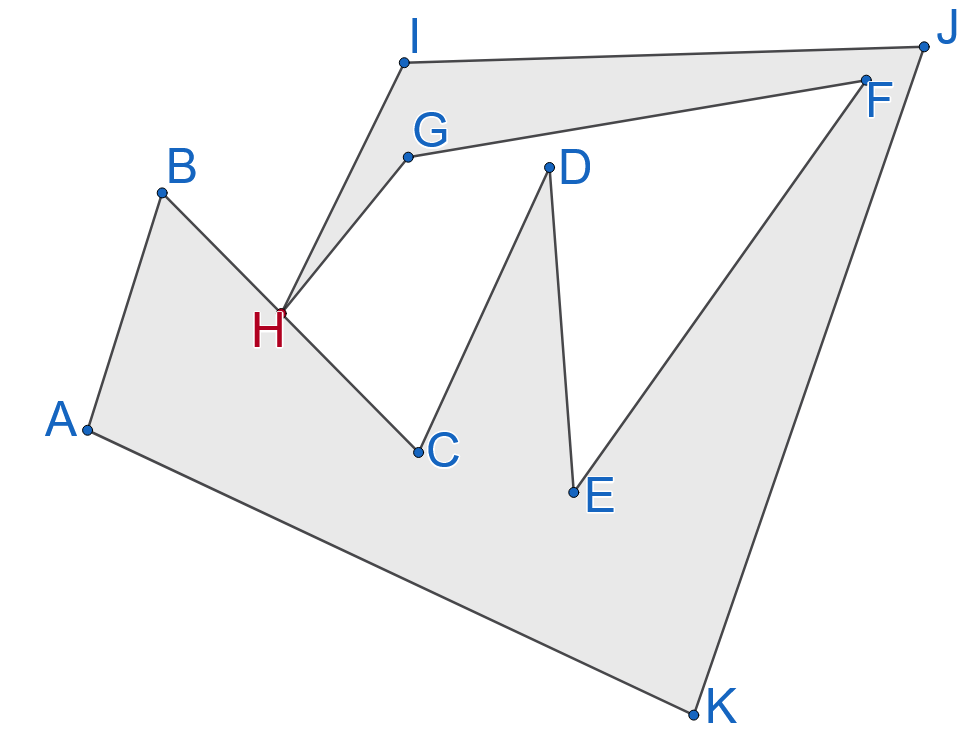}
    \caption{An invalid polygon with a hole. The vertex $H$ intersects the boundary of the outer polygon, violating the requirement that holes must lie entirely within the outer polygon}
    \label{fig:hole_non_example}
\end{figure}
\begin{figure}[h]
    \centering
    \includegraphics[height=3.5cm]{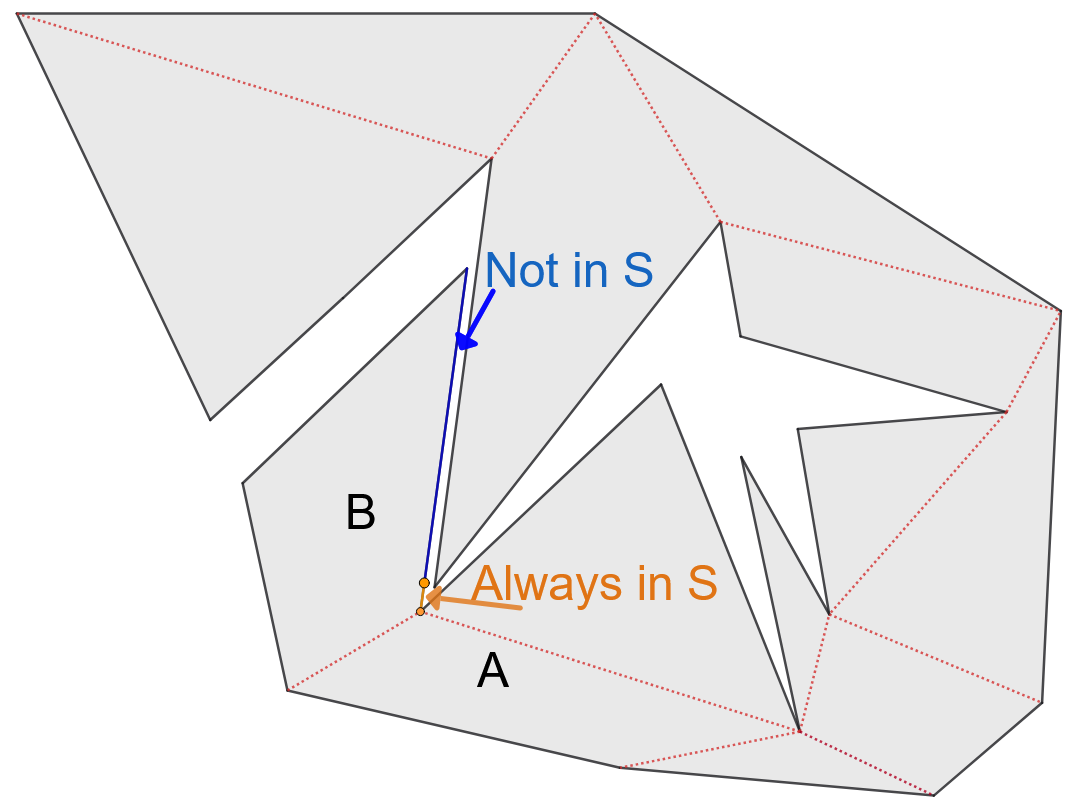}
    \caption{A polygon that demonstrates the size of the valid area for placing a guard in the adjacent piece.}
    \label{fig:set_lemma}
\end{figure}
\vspace{-4mm}
\begin{lemma}\label{thm:pushing_lemma}
Let $A$ and $B$ be adjacent convex polygons in a convex decomposition of a polygon, sharing an edge, which is the diagonal $d$. Suppose a 0-visibility guard is initially placed on $d$ to guard $A$. Then this guard may be repositioned to a point on a real edge that guards $A$ and $B$ in 2-visibility.
\end{lemma}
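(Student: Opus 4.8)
The plan is to combine Lemma~\ref{thm:nonempty_S} (which guarantees a non-empty target set $S$ on a real edge) with Lemma~\ref{thm:boundary_guarding} (which upgrades boundary-visibility to full-region visibility). The repositioning itself is then essentially a relocation of the guard from $d$ to a carefully chosen point of $S$, and the work is in verifying that this single point simultaneously handles $A$ and $B$ under $2$-visibility.

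First I would invoke Lemma~\ref{thm:nonempty_S} applied to the pair $(A,B)$ separated by the diagonal $d$: this yields a non-empty set $S$ of points lying on the boundary $\partial P$ (i.e.\ on a real edge) such that every point of $S$ sees the entire boundary $\partial A$ under $2$-visibility. I would then pick any $g \in S$ and relocate the guard there. By Lemma~\ref{thm:boundary_guarding}, since $g$ is a point outside $A$ that sees all of $\partial A$, the guard $g$ sees every interior point of $A$; this recovers exactly the coverage the original $0$-visibility guard on $d$ provided for $A$, now under the more permissive $2$-visibility. The one subtlety to record here is the case distinction already embedded in Lemma~\ref{thm:nonempty_S}: if $B$ is an interior piece with no real edge, then $S$ lives on the real edge of a substitute piece $C$ reached through diagonals emanating from a shared vertex $v$; I would carry that substitution through transparently, so that ``the real edge on which $g$ lands'' is well-defined in every configuration.

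The remaining obligation is that $g$ also guards $B$ in $2$-visibility. Here I would use the same convexity argument that underlies Lemma~\ref{thm:nonempty_S}: since $g$ sits on a real edge adjacent to $B$ and $B$ is convex, the only possible obstruction between $g$ and a point of $B$ is the edge $g$ itself rests on (crossed once, hence within the $2$-visibility budget) or an edge of $\partial B$, and by the convexity of $B$ together with the fact that $g$ lies on $B$'s boundary or on the boundary of a piece diagonal-connected to it, any such crossing consumes at most two edges. Thus $g$ sees all of $\partial B$ and, by Lemma~\ref{thm:boundary_guarding} again, all of $B$. Combining the two parts, the single repositioned guard $g$ covers both $A$ and $B$ under $2$-visibility, as claimed.

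The main obstacle I anticipate is precisely the guarantee that \emph{one} point of $S$ works for $B$ and not merely for $A$: Lemma~\ref{thm:nonempty_S} only asserts visibility of $\partial A$, so I must supply the extra convexity/adjacency argument for $B$ and confirm that the $2$-visibility budget is never exceeded even when the guard's own supporting edge and an edge of $A$ or $B$ are both crossed on a single ray. The interior-piece case (routing through $C$ via a common vertex $v$) is where the crossing count is most delicate, and I would treat that configuration with the most care, appealing to the non-overlapping-pocket structure established in Lemma~\ref{thm:blocking_lemma} to ensure no ray is blocked more than twice.
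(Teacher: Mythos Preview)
Your proposal is correct and follows essentially the same approach as the paper: invoke Lemma~\ref{thm:nonempty_S} to obtain the non-empty set $S$ on a real edge, then apply Lemma~\ref{thm:boundary_guarding} to upgrade boundary-visibility of $A$ to full coverage of $A$. If anything, you are more careful than the paper's own proof, which handles the coverage of $B$ only implicitly (via the phrase ``without loss of coverage, since there may only be 2 new edges blocking it'') and does not separately treat the interior-piece case that you flag.
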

\vspace{-3mm}
\begin{proof}
By Lemma~\ref{thm:nonempty_S}, there exists a non-empty set of points $S$ on the boundary of the polygon $P$, arbitrarily close to $d$, such that each point in $S$ sees the entire boundary of $A$ under 2-visibility. By Lemma~\ref{thm:boundary_guarding}, any such point that sees all of $\partial A$ also sees the entire interior of $A$. Thus, a 2-visibility guard originally placed on the diagonal $d$ can be moved to a point in $S$ without loss of coverage, since there may only be 2 new edges blocking it. Since $S$ contains points that guard $\partial A$ in 2-visibility, placing the guard on one of these points maintains full 2-guarding of $A$.\qed
\end{proof}
We also claim that we do not adjust any single guard more than once. Since we are traversing the dual graph, we will be adjusting the guards forward into the next piece, or the one after that, until we reach the final ear. In this method, we will not backtrack, and thus we will not be adjusting 2 guards to the same edge.

\subsection{$(k+2)$-guarding with $k$-visibility} \label{sec:general k}
Given a polygon $P$, we propose Algorithm \ref{alg:pseudo_2} for $(k+2)$-guarding of $P$ using $k$-visibility. 
Before we go into detail, we provide a short overview. \\
We begin by decomposing the non-convex polygon $P$ into an optimal convex decomposition and constructing its dual graph. Adjacent convex regions are then paired along the dual graph wherever possible, leaving some regions unpaired (such as ears or leaves) when pairing is not feasible due to the structure of the dual. For each paired set of adjacent convex pieces, a guard is placed on a real edge or on the shared diagonal between them so that both regions are visible under $k$-visibility. For isolated convex regions, a guard is similarly placed on one of their real edges. Around each vertex, a small area containing only its two incident edges is considered. Within this area, every point can see at least one of the adjacent convex regions under $k$-visibility, ensuring local coverage of the boundary without obstruction by other parts of the polygon. Finally, each guard is relocated by a small $\varepsilon$ along the polygon boundary in the direction away from the root of the dual tree. This guarantees that the guards remain on real edges after merging adjacent convex pieces and that the entire polygon remains fully covered under $k$-visibility. See Fig.~\ref{fig:combined} for a visual example of the algorithm.
We have the following theorem:
\begin{theorem} \label{thm:3}
Given a polygon $P$ (possibly with holes) and a visibility parameter $k$, and let $k = 2i$ for some integer $i \geq 0$, there exists a placement of guards on the edges of $P$ such that every point in $P$ is $(k+2)$-guarded under $k$-visibility.
\end{theorem}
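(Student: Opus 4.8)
The plan is to prove Theorem~\ref{thm:3} by induction on the structure of the dual tree of the optimal convex decomposition, building up from the $k=0$ case and then lifting to general even $k$. First I would establish the base case $k=0$, where the claim is that $P$ can be $2$-guarded under $0$-visibility; this is exactly the result of Belleville~\cite{Belleville1994KGuardingPO} cited in Theorem~\ref{thm:holes}, so it can be invoked directly. The substance of the argument is the \emph{doubling per visibility level}: each increment of $k$ by $2$ must add $2$ to the guarding multiplicity, which is where the convex decomposition and the pairing-and-merging machinery of Lemmas~\ref{thm:blocking_lemma}--\ref{thm:pushing_lemma} enter.

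Next I would make the guard-placement procedure of Algorithm~\ref{alg:pseudo_2} precise and argue its correctness. The idea is to decompose $P$, form the dual graph, and root the dual tree. Pairing adjacent convex pieces and placing a guard on a shared diagonal or real edge gives, by Lemma~\ref{thm:pushing_lemma}, a $2$-visibility guard that covers both pieces after the guard is pushed onto a real edge; Lemma~\ref{thm:nonempty_S} guarantees the target set $S$ on $\partial P$ is non-empty, and Lemma~\ref{thm:boundary_guarding} upgrades ``sees the boundary of the convex piece'' to ``sees the whole convex piece.'' The $\varepsilon$-relocation step along the boundary away from the root ensures that when adjacent pieces are merged, no single ray of vision is blocked more than once: this is precisely the content of Lemma~\ref{thm:blocking_lemma}, which certifies that a merged concave piece is \emph{entirely} visible under $2$-visibility. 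The local vertex-neighbourhood argument handles coverage near each vertex, where only the two incident edges can obstruct, so every boundary point retains the required multiplicity.

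To get from $2$-guarding to $(k+2)$-guarding, I would iterate the construction across the $i = k/2$ levels. At each level, the budget of $k$-visibility allows a ray to penetrate two additional edges, so the guards associated with one further ``ring'' of merged pieces in the decomposition remain visible to any target point; summing the base contribution of $2$ guards with the $2$ additional guards contributed at each of the $i$ levels gives $2 + 2i = k + 2$. For the case of holes, I would invoke Theorem~\ref{thm:holes}: placing guards on every edge of each hole, as in that proof, compensates for the reflex vertices and pockets the hole introduces, and because each such guard also enjoys $k$-visibility with $k \ge 2$, the same level-by-level accounting applies. Finally, I would verify the claim, asserted after Lemma~\ref{thm:pushing_lemma}, that no guard is relocated more than once: traversing the dual tree outward from the root and always pushing guards forward into the next piece (or the one after) means we never backtrack, so two guards are never assigned to the same edge and the multiplicities add cleanly.

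The hard part will be the inductive accounting that each increment of $k$ by $2$ faithfully contributes exactly $2$ additional guards to \emph{every} point of $P$, uniformly across the polygon, including points deep inside pockets or near reflex chains where the sweep of Algorithm~\ref{alg:sweep} selects the critical vertex. In particular, I expect the delicate step to be showing that the ``two extra edges'' of penetration budget granted by each level of $k$ are never consumed by obstacles that fail to yield a genuinely new guard, so that the worst-case point still sees the full $k+2$ guards rather than fewer; ruling out such degenerate blocking configurations is where Lemma~\ref{thm:blocking_lemma}'s ``no ray blocked twice'' guarantee must be applied carefully at every merge.
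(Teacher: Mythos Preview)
Your approach differs structurally from the paper's: you propose to induct on the visibility parameter $k$, starting from Belleville's $2$-guarding result at $k=0$ and arguing that each increase of $k$ by $2$ contributes two more visible guards at every point. The paper instead fixes $k$ from the outset and inducts on the number $n$ of convex pieces in the decomposition, invoking the Two Ears Theorem to attach one ear at a time while maintaining the $(k+2)$-guarding invariant directly; its base case is a single convex piece with a guard placed on each of its real edges, not the Belleville result (which the paper uses only in Theorem~\ref{thm:holes}, not here).

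The step you yourself flag as ``the hard part'' is a genuine gap, and I do not see how to close it along the lines you sketch. Your ``ring'' picture implicitly assumes that a point $p$ in a merged piece $M_0$ reaches a guard $g$ in a merged piece $M_j$ at dual-tree distance $j$ via a ray that traverses the $j$ intervening pockets, each costing two edge-crossings, for a total of $2j$. But the segment $\overline{pg}$ is a straight line; it need not pass through $M_1,\dots,M_{j-1}$ at all, and may instead exit $M_0$ through an unrelated real edge and cross a sequence of boundary edges whose count bears no relation to $j$. Lemma~\ref{thm:blocking_lemma} controls only a \emph{single} merge of two adjacent convex pieces; it does not yield a global bound tying dual-tree distance to the number of real edges a ray crosses, which is what your level-by-level accounting needs. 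The paper avoids this entirely by never reasoning about distant pieces: its inductive step adds one ear with at least two real edges, places guards on those edges, and uses Lemmas~\ref{thm:nonempty_S} and~\ref{thm:pushing_lemma} only locally, to relocate a guard off the diagonal that is removed when the ear is merged. The $(k+2)$ multiplicity in the paper's argument comes from the per-piece edge count enforced by line~2 of Algorithm~\ref{alg:pseudo_2} (pieces merged to have at least $k$ edges), not from summing two-guard contributions across $k/2$ levels.
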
\vspace{-3.5mm}
\begin{proof}
This can be accomplished following the steps of Algorithm \ref{alg:pseudo_2}. The proof of the correctness of this algorithm is as follows: We prove that every point in the polygon is visible to at least $(k+2)$ guards by induction on the convex pieces.
\\
\textbf{Base Case:} $n = 1$. For a single convex polygon, place one guard at the midpoint of each real edge of the ear. Since the polygon is convex, every point in the polygon is $k$-visible to each guard, and with at least $k+2$ edges, every point is seen by at least $k+2$ guards.
\\
\textbf{Inductive Step:} Assume that for some $n \geq 1$, every point in a polygon $P_n$ with $n$ convex pieces is $(k+2)$-guarded. We show this holds for a polygon with $n+1$ convex pieces. Consider a polygon with $n+1$ convex pieces. Removing one convex piece yields a polygon with $n$ convex pieces, where every point is $(k+2)$-guarded by assumption. When we add a convex piece to a polygon, it is always an ear, and therefore has 2 real edges as shown by the well-known Two Ears Theorem \cite{ears_meisters}. For the new convex piece, if there was a guard on edge that connects the $(n+1)$-th piece to the polygon, place one guard on the real edge within the intersection of the strong segment $k$-visibility polygons of the $n$-th piece, and place the other guard anywhere on an available free edge. Otherwise, we can place 2 guards onto the available real edges. Merging the piece into $P_n$ ensures that every point remains $(k+2)$-guarded, as long as we have placed the guards within the visibility polygons of all the segments of the adjacent piece, as proved in Lemma~\ref{thm:blocking_lemma}, Lemma~\ref{thm:nonempty_S}, and Lemma~\ref{thm:pushing_lemma}. It is also possible that there is a piece inside the pocket that blocks the visibility when we combine a non-monotone polygon with a convex piece (Fig~\ref{fig:set_lemma}), So we check if the next piece is completely visible to the segment. If it is not, then we need to use Lemma \ref{thm:nonempty_S}, and find the set of points $S$ that will guard the $X$ by finding the critical vertices between the two pieces, $X$ and $C$, which can be accomplished with Algorithm~\ref{alg:sweep}. A critical vertex is a reflex vertex whose adjacent edges lie in the same half-plane. When the sweep passes such a vertex, the visible region shrinks because the vertex acts as a spike that obstructs part of the pocket. Consequently, the feasible set $S$ on the segment is determined by the endpoint of the segment and these critical vertices. See Figure~\ref{fig:sweep} for a visual illustration. Therefore, every point in the polygon with $C+1$ convex pieces is visible to at least four guards, and so by the principle of mathematical induction, every point in a polygon with holes is visible to at least four guards. By induction, every point in a polygon with holes is $(k+2)$-guarded. \qed
\end{proof}
\begin{algorithm}[h!]
    \caption{$(k+2)$-Guarding a Polygon with $k$-visibility}
    \label{alg:pseudo_2}
    \begin{algorithmic}[1]
        \REQUIRE A non-convex polygon $P$ and a parameter $k$.
        \ENSURE Guard set $G$ that $(k+2)$-guards $P$ under $k$-visibility.
        \STATE Decompose $P$ into optimal convex pieces.
        \STATE Merge pieces to have $\ge k$ edges (except ears).
        \STATE Construct the dual graph of the convex decomposition.
        \STATE Let $C_0$ be the leftmost convex ear in the decomposition.
        \STATE Place one guard at the midpoint of each real edge of $C_0$.
        \FOR{each convex piece $C_i$ along the dual graph path starting from $C_0$}
            \STATE Perform the sweep-line algorithm (Algorithm~\ref{alg:sweep}) to find critical points in the pocket formed between $C_i$ and the next piece $C_{i+1}$.
            \IF{there are fewer than $k/2$ vertices in the pocket}
                \STATE Place guards at the midpoints of real edges of $C_i$.
            \ELSE
                \STATE Place the guards in the set found by Algorithm~\ref{alg:sweep}.
            \ENDIF
            \IF{guards cannot be placed on real edges because not enough edges are available}
                \STATE Place the guards on the diagonals.
            \ENDIF
            \STATE Merge $C_i$ with the current merged region.
            \IF{a guard is placed on a diagonal that is removed during merging}
                \STATE Relocate the guard $g_f$ to a point on a real edge of the adjacent region from which the entire boundary of the previous convex piece $C_{i-1}$ is visible under $k$-visibility, by computing the intersection of $C_{i-1}$ and the strong $k$-visibility polygons of every edge in $C_i$.
                \IF{there are no real edges in the adjacent region}
                    \STATE Let $e$ be the real edge connected to the vertex of the diagonal $d$.
                    \STATE Let $S$ be the set of points on $e$ found using Algorithm~\ref{alg:sweep} with respect to the pocket and edge $e$.
                    \STATE Relocate the guard to a point in $S$.
                \ENDIF
            \ENDIF
        \ENDFOR
        \RETURN{The set of $k$-visibility guards that $(k+2)$-guard the entire polygon.}
    \end{algorithmic}
\end{algorithm}
As a result of Algorithm~\ref{alg:pseudo_2} and Theorem~\ref{thm:3}, it follows that there exists an upper bound on the number of guards required to $k+2$-guard a concave polygon under $k$-visibility. 
\vspace{-4mm}
\begin{obs}
    For any concave polygon $P$ that is decomposed into $C$ convex pieces, at most $kC$ guards are sufficient to $(k+2)$-guard $P$.
\end{obs}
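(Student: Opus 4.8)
The plan is to obtain the bound as a direct accounting of the guards placed by Algorithm~\ref{alg:pseudo_2}, whose correctness (that the resulting set indeed $(k+2)$-guards $P$) is already supplied by Theorem~\ref{thm:3}. Since every guard the algorithm creates is placed on an edge of some convex piece during exactly one iteration of its main loop, it suffices to charge each guard to a convex piece and to show that no piece is ever charged more than $k$ times; summing over the $C$ pieces of the decomposition then yields the claimed $kC$ upper bound.

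First I would fix the correspondence between guards and pieces: the initial ear $C_0$ is charged with the guards placed at the midpoints of its real edges (line~5), and every subsequent piece $C_i$ is charged with the guards placed during its iteration of the \textbf{for} loop --- whether the midpoint guards, the guards returned by the sweep of Algorithm~\ref{alg:sweep}, or the diagonal guards (lines~9--16). The crucial point is that the relocation branch (lines~17--24) \emph{moves} an already-charged guard onto a real edge of a neighbouring region rather than introducing a new one, so it never increases the total count; this rests on Lemma~\ref{thm:pushing_lemma} and Lemma~\ref{thm:nonempty_S}, which guarantee that the repositioned guard lands on a real edge while preserving coverage. Likewise, the remark following Lemma~\ref{thm:pushing_lemma} --- that no guard is adjusted more than once, because the dual tree is traversed without backtracking --- ensures that each guard is charged to a single piece, so there is no double counting.

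The main step, and the one I expect to be the real obstacle, is bounding the number of guards charged to a single piece by $k$. The merging step (line~2) is what makes this possible: it enforces that every non-ear piece carries on the order of $k$ edges, so that a piece never demands more than $k$ edge-placements to contribute its share of coverage. The subtlety is that $(k+2)$-guarding nominally asks for $k+2$ guards visible from each point, yet we spend only $k$ per piece; the missing two guards are supplied \emph{for free} by an adjacent piece, seen across the shared diagonal through a single wall, which is permitted since $k\ge 2$. I would therefore argue that each point of a piece sees its own $k$ edge-guards directly (by convexity, as in the base case of Theorem~\ref{thm:3}) and two further guards of a neighbour through the diagonal, so that $k$ guards per piece suffice for global $(k+2)$-visibility. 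Establishing this per-piece count of $k$ rigorously --- in particular confirming that the merging step caps the edge-placements and that the borrowed pair of guards is always available along the dual-tree path --- is where the bulk of the argument lies; once it is in place, the bound $\sum_{i=1}^{C} k = kC$ is immediate.
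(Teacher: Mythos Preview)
Your proposal is correct and follows the same counting the paper uses: at most $k$ guards charged to each convex piece, summed over the $C$ pieces of the decomposition. The paper's own justification is only the two-sentence remark following the observation --- that each convex piece can be $(k+2)$-guarded by at most $k$ guards on its boundary, hence $kC$ in total --- so your treatment of the relocation step and the ``borrowed'' pair of neighbouring guards is already more detailed than what the paper records.
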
\label{guard-bound}
\vspace{-1mm}
The decomposition used in our construction ensures that each convex piece can be $(k+2)$-guarded by at most $k$ guards placed on its boundary. Since the decomposition yields $C$ convex pieces, assigning at most $k$ guards per piece requires no more than $kC$ guards. Usually, as $k$ increases, we can view more of the polygon, thus our number of guards should decrease. However, in this problem, as $k$ increases, the maximum $M$-guarding possible increases up to the number of vertices $n$.
\vspace{0.5mm}

\begin{figure}[h!] 
    \centering 
    \begin{subfigure}[b]{0.48\columnwidth}
        \includegraphics[width=\textwidth]{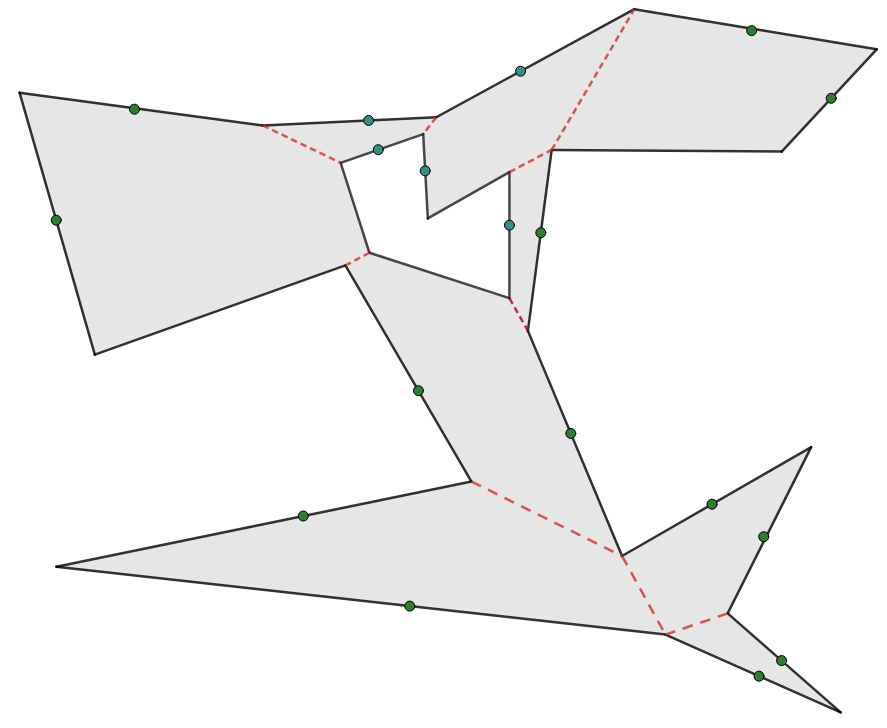}
        \caption{An example of a polygon after decomposition and 2-guarding}
        \label{fig:final}
    \end{subfigure}
    \hfill 
    \begin{subfigure}[b]{0.48\columnwidth}
        \includegraphics[width=\textwidth]{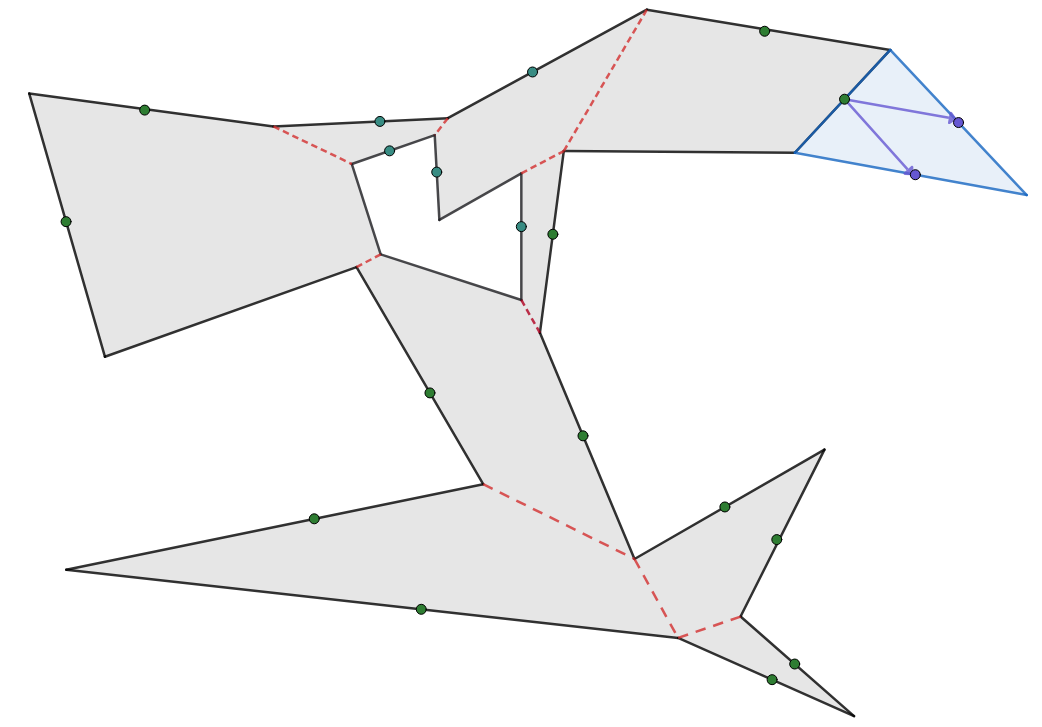}
        \caption{An example of a polygon with an additional piece.}
        \label{fig:induction}
    \end{subfigure}
    \caption{Panels (a) and (b) show different steps of Algorithm \ref{alg:pseudo_2}.}
    \label{fig:combined}
\end{figure}
\vspace{-2mm}
\section{Conclusion}\label{sec:future work} \vspace{-2mm}
In this work, we focused on addressing $M$-guarding under $k$-visibility without attempting to minimize the number of guards. Our approach allows for new strategies that ensure multiple guards cover an area under $k$-visibility via an efficient algorithm, and presents a bound for the number of guards facilitating practical applications where redundancy is more critical than minimizing number of guards deployed.
\\
This work was supported by the Natural Sciences and Engineering Research Council of Canada (NSERC).
\newpage
\small
\bibliography{references}

\appendix

\clearpage
\section{Proofs}\label{appendix:proofs}

\setcounter{lemma}{1}
\begin{lemma} \label{appendix:L4}
Let $A$ be a convex polygon. If a point $g$ outside of $A$ sees the entire boundary $\partial A$, then $g$ sees every point in $A$.
\end{lemma}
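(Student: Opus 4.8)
The plan is to exploit the monotonicity of the obstacle-crossing count along a fixed ray from $g$, together with the convexity of $A$. The key observation is that for any interior point $p$ of $A$, the segment from $g$ to $p$ is an \emph{initial portion} of a longer segment from $g$ to a suitable boundary point $q \in \partial A$. Since $g$ already sees $q$ by hypothesis, the (weaker) visibility requirement at the nearer point $p$ will follow at once.

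Concretely, I would first fix an arbitrary $p$ in the interior of $A$ and consider the ray from $g$ through $p$. Because $g$ lies outside the convex region $A$ while $p$ lies in its interior, this ray meets $A$ in a nondegenerate chord; let $q$ be the farther of the two intersection points of the ray with $\partial A$. Convexity guarantees that $q$ exists, is unique, lies on the boundary, and that the three points occur in the order $g, p, q$ along the ray, with $p$ lying strictly between $g$ and $q$. (Since $p$ is interior, the ray genuinely passes through $\mathrm{int}(A)$, so the chord is nondegenerate and $q$ is a well-defined boundary point even in the degenerate case where $q$ happens to be a vertex.)

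Next I would compare the two segments $\overline{gp}$ and $\overline{gq}$. Every obstacle edge met by the open segment $(g,p)$ is also met by the open segment $(g,q)$, because $(g,p)\subset(g,q)$; consequently the number of obstacles crossed by $\overline{gp}$ is at most the number crossed by $\overline{gq}$. By hypothesis $g$ sees the boundary point $q$ under $k$-visibility, so $\overline{gq}$ crosses at most $k$ obstacles, and therefore so does $\overline{gp}$. Thus $g$ sees $p$. As the boundary points of $A$ are seen by assumption and $p$ was an arbitrary interior point, $g$ sees every point of $A$.

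The only real content, and hence the step I would state most carefully, is the monotonicity/subset claim: extending the endpoint along a fixed ray can only add crossings, never remove them, so the crossing count is nondecreasing along the ray. I would phrase this as a one-line monotonicity observation and then invoke it for the inclusion $(g,p)\subset(g,q)$. Everything else---the existence, uniqueness, and ordering of $q$---is a direct consequence of $A$ being convex and $g$ being exterior, so I expect no genuine difficulty there beyond a sentence justifying each point.
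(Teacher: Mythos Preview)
Your proof is correct and takes a genuinely different route from the paper's. The paper selects \emph{two} boundary points $q_1,q_2\in\partial A$ so that the triangle $\triangle gq_1q_2$ contains $p$, and then argues that $\overline{gp}$ lies inside this triangle and hence in the guard's visibility region. Your argument is more economical: you extend $\overline{gp}$ along the same ray to the \emph{far} boundary point $q$, note the inclusion $(g,p)\subset(g,q)$, and invoke monotonicity of the obstacle-crossing count along the ray. This uses a single boundary point rather than two, avoids any auxiliary claim about a triangle, and makes the $k$-visibility case fully explicit (the paper's phrasing ``unobstructed'' reads as tailored to $0$-visibility, whereas your monotonicity observation handles all $k$ uniformly). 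Both arguments rely on convexity of $A$ to produce the needed boundary point(s), but yours isolates the essential mechanism more cleanly.
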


\begin{proof}
    Let $g$ be a guard located outside the convex polygon $A$, such that $g$ sees the entire boundary $\partial A$. Let $p$ be any point in the interior of $A$. Since $A$ is convex, any segment between two boundary points lies entirely within $A$. Because $g$ sees the entire boundary, we can select two boundary points $q_1$ and $q_2$ visible to $g$ such that the triangle $\triangle gq_1q_2$ contains $p$. Since $\overline{gq_1}$ and $\overline{gq_2}$ are unobstructed and $q_1, q_2 \in \partial A$, and since $A$ is convex, the triangle $\triangle gq_1q_2$ intersects $A$ only in its interior. Therefore, the line segment $\overline{gp}$ lies entirely within $\triangle gq_1q_2$ and hence within the union of $A$ and the guard’s view. So $p$ is visible to $g$. Hence, every point in $A$ is visible to the guard $g$, and the lemma holds.\qed
\end{proof}

\end{document}